\newcommand{\ModelName}{E3D2}
\newtheorem{theorem}{Theorem}
\DeclareMathOperator*{\argmax}{arg\,max}
\newcolumntype{C}{>{\centering\arraybackslash}X}
\title{Explore 3D Dance Generation via Reward Model from Automatically-Ranked Demonstrations}
\author{
    Zilin Wang\textsuperscript{\rm 1\equalcontrib},
    Haolin Zhuang\textsuperscript{\rm 1\equalcontrib},
    Lu Li\textsuperscript{\rm 1\equalcontrib},
    Yinmin Zhang\textsuperscript{\rm 2},
    Junjie Zhong\textsuperscript{\rm 3},\\
    Jun Chen\textsuperscript{\rm 1},
    Yu Yang\textsuperscript{\rm 1},
    Boshi Tang\textsuperscript{\rm 1},
    Zhiyong Wu\textsuperscript{\rm 1}\thanks{Corresponding author.}
}
\begin{document}

\maketitle

\begin{figure*}[h]
\centering
  \includegraphics[width=0.85\textwidth]{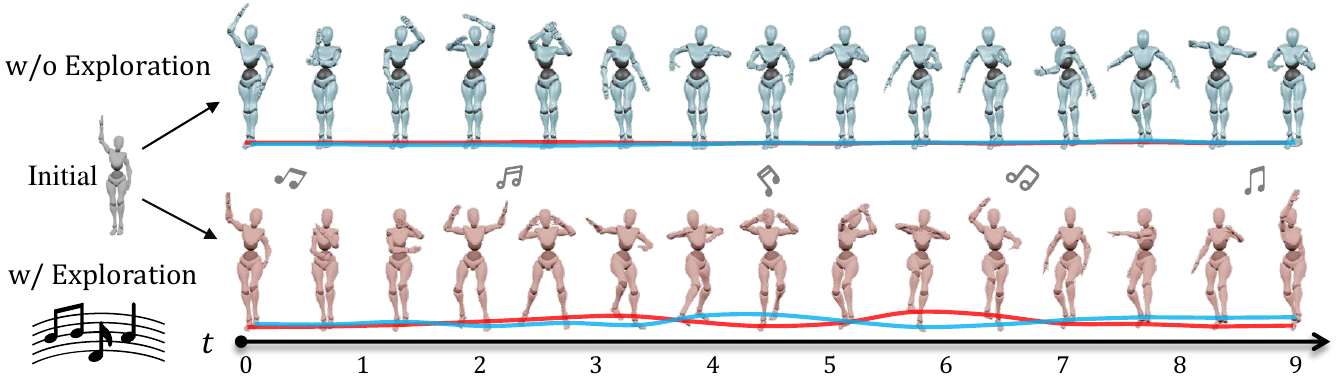}
  \vspace{-5pt}
  \caption{
    Visualizations. Red and blue lines represent right and left leg movements, respectively. 
    \textit{Top}: Dance examples generated by the policy lack exploration, exhibiting limited leg movements and diversity. 
    \textit{Bottom}: Dance examples generated by the policy reinforced via exploration align with human preferences, showcasing complex leg movements and increased diversity.
  }
  \label{fig:specimen}
\vspace{-10pt}
\end{figure*}

\begin{abstract}

This paper presents an Exploratory 3D Dance generation framework, \ModelName, designed to address the exploration capability deficiency in existing music-conditioned 3D dance generation models. 
Current models often generate monotonous and simplistic dance sequences that misalign with human preferences because they lack exploration capabilities.
The \ModelName~framework involves a reward model trained from automatically-ranked dance demonstrations, which then guides the reinforcement learning process. 
This approach encourages the agent to explore and generate high quality and diverse dance movement sequences. 
The soundness of the reward model is both theoretically and experimentally validated. 
Empirical experiments demonstrate the effectiveness of E3D2 on the AIST++ dataset.
Project Page: https://sites.google.com/view/e3d2.


\end{abstract}


\section{Introduction}

Music-conditioned 3D dance generation is an emerging field that combines the art of dance and the science of machine learning, fostering a novel and creative fusion.
By utilizing music as a guiding condition, dance generative models create dance poses synchronized with the melody and rhythm of the music.
Several studies \cite{huangdance, huang2022genre, li2021ai, siyao2022bailando} utilize generative networks to auto-regressively generate dance sequences in supervised learning, with music as the condition and human dance poses as the supervisory signal.
These approaches are capable of producing complete dance movements, as significant advancements in the field of dance generation.

Nevertheless, we observe that supervised learning approaches often exhibit the following three shortcomings:
(1) Weak generalization for unseen music, which affects diversity and quality,
(2) Fragility of auto-regressive models, which are prone to severe compounding rollout errors, particularly when data is scarce, leading to the potential collapse of the dance sequence, and
(3) Misalignment between generated dances and human preferences, which stems from the excessive focus on mimicking human movements without considering human preferences (\textit{e.g.}, movement difficulty and aesthetic appeal). 
Inspired by the learning process of human dancers, novice dancers not only require the mechanical imitation of movements from dance experts but also continuous practice and exploration to develop their skills. 
Furthermore, receiving feedback from experts plays a crucial role in reinforcing their movements, ultimately helping them become proficient dancers.

In this work, we argue that the aforementioned three issues arise from the lack of \textbf{exploration} capacity in current dance generation models.
We expect trained dance agents to explore various movements within the dance space while receiving accurate signals indicating which movements are desirable, thereby increasing the probability of generating such movements.
Based on this assumption, we propose the \textbf{Exploratory 3D Dance generation framework, \ModelName,} to address the issue of exploration.
To achieve this, we model the music-conditioned dance generation task as a Markov Decision Process (MDP) and employ Reinforcement Learning (RL) to endow the dance agent with the ability to explore.
For the reward signal, we utilize Inverse Reinforcement Learning (IRL) to train a Reward Model (RM) from automatically ranked dance demonstrations, which guides the exploration and exploitation of the dance agent.
As shown in Figure \ref{fig:diagram}, we firstly use Behavior Cloning \cite{michie1990cognitive} to train an initial dance generation policy, allowing the agent to learn basic dance movements.
Then, we inject increasing levels of noise into multiple cloned initial dance generation policies to acquire multiple policies with decreasing performance, generating different quality dance demonstrations.
Next, we train a reward model with these automatically ranked dance demonstrations.
Finally, under the guidance of the learned reward model, we encourage the dance agent to explore using reinforcement learning, ultimately obtaining the optimal dance generation policy.

Our design enables the dance policy to address the above issues through exploration:
(1) To tackle the limited diversity of generated dances, 
reinforcement learning encourages the dance agent to efficiently explore a broader range of state-action pairs, where new movements emerge naturally, either by combining the sub-actions from various human dances or by creating entirely novel dances.
This results in more diverse dance movements, as shown in Figure \ref{fig:specimen}.
Moreover, the consistent distribution of music and dance movement in the environment ensures the stability of dynamics, such as transition probability, allowing the learned reward to generalize. This theoretically guarantees the generalization of the learned policies, guided by the learned reward, for both seen and unseen environments.
(2) Regarding the fragility of auto-regressive models under supervised learning, which suffers from severe compounding rollout errors of single-step decisions with respect to long planning horizons, our proposed method is optimized by sequence-based reward with trial and error.
Through sequence-based exploration and exploitation, our proposed method focuses on the generated dance trajectories rather than its single-step error, avoiding the compounding errors \cite{asadi2019combating, janner2021offline, janner2022planning}.
(3) To address the misalignment between the policy and human preferences, our proposed reward model is able to distinguish the differences between attractive and ordinary dances due to the assumption that the dance generation with a higher noise level aligns less with human preferences.
During the exploration and exploitation, human preference is incorporated into the dance policy through the guidance of the reward model.


Empirical experiments on the AIST++ dataset \cite{li2021ai} demonstrate that the proposed \ModelName~ outperforms the behavior cloning (pure supervised) method across multiple metrics. 
Moreover, we perform an in-depth analysis and provide a theoretical proof (in Appendices) of the reward model.
The contributions of this article are three-fold:
\begin{itemize}
    \item We illuminated three issues, weak generalization, fragility, and misalignment, in existing supervised dance generation methods attributable to a lack of exploration capability.
    \item To address the deficiency of exploration, we propose an Exploratory 3D Dance generation framework, \ModelName, which encourages dance agents to explore by introducing the inverse reinforcement learning method with a learned reward model that reflects human preference.
    \item Empirical experiments demonstrate the effectiveness and generalization performance of our reward model and \ModelName~over supervised models.
\end{itemize}

\begin{figure*}[!htpb]
\centering
    \includegraphics[width=0.85\linewidth]{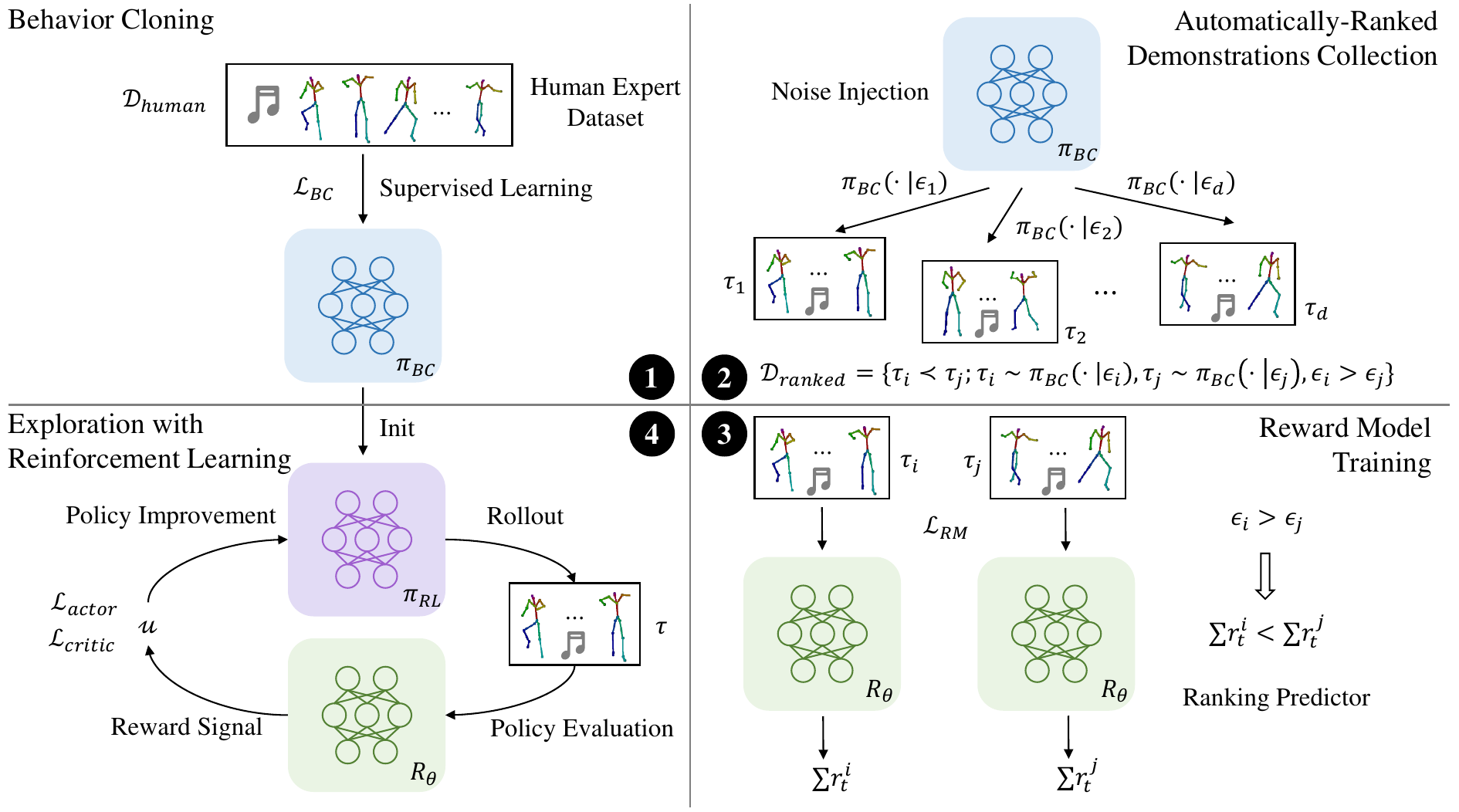}
    \vspace{-5pt}
\caption{Diagram of our \ModelName: (1) An initial policy $\pi_{BC}$ is distilled from the human expert dataset through behavior cloning. 
    (2) Automatically ranked dance demonstrations are collected by $\pi_{BC}$ with different levels of noise.
    (3) A reward model $R_\theta$ is trained from these automatically ranked demonstrations to rank the quality of dance trajectories.
    (4) A reinforcement learning policy $\pi_{RL}$ is initialized with $\pi_{BC}$ and optimized to obtain the optimal dance policy, guided by the reward model $R_\theta$.}
\vspace{-10pt}
\label{fig:diagram}
\end{figure*}
\section{Related Works}

\subsection{Music-conditioned Dance Generation}
Music-conditioned dance generation is a cross-modal task involving auditory and visual integration.
Existing methods for music-conditioned dance generation can be broadly classified into two categories: retrieval-based methods and direct generation methods.
Retrieval-based methods \cite{ofli2012learn2dance, fan2011example,fukayama2015music, lee2013music,ye2020choreonet, chen2021choreomaster,au2022choreograph} divide dances into fixed length units and choreograph by concatenating these units according to the melody of the music.
Unfortunately, 
the fixed length and Beats Per Minute (BPM) of the segmented dance units imposed significant restrictions on the rhythm of the music used to drive the dance.
To tackle these issues, direct generation methods \cite{alemi2017groovenet,tang2018anidance,ahn2020generative,huangdance, huang2022genre,zhuang2022music,valle2021transflower,wang2022groupdancer,gao2022pc,li2022danceformer,li2020learning,tseng2023edge} have been proposed which generate dance motion from scratch. 
These methods are trained in a supervised learning fashion, with music as the conditioning input and real human dance as the supervisory signal.
In this work, we focus on exploratory capabilities during training to improve the quality and diversity of the generated dance sequences.

\subsection{Preference-based Inverse Reinforcement Learning}

The goal of Preference-based Inverse Reinforcement Learning (PIRL) \cite{cheng2011preference, sugiyama2012preference, wirth2017survey, christiano2017deep} is to learn a reward function from expert preferences. 
Compared with learning the reward model directly from expert behaviors through conventional IRL methods \cite{russell1998learning, ng2000algorithms, abbeel2004apprenticeship}, \textit{e.g.}, Adversarial Inverse Reinforcement Learning (AIRL) \cite{fu2018learning}, PIRL have been effectively applied in many high-dimensional state spaces \cite{brown2020better,ibarz2018reward} where AIRL may not work effectively \cite{tucker2018inverse}.
Besides, PIRL could also serve as a way to introduce the human feedback (RLHF) \cite{christiano2017deep, warnell2018deep,macglashan2017interactive}, make the model better aligned with human preferences \cite{stiennon2020learning, wu2021recursively, nakano2021webgpt, ganguli2022red, glaese2022improving}.
To address the issue of sub-optimal demonstrations, T-REX \cite{brown2019extrapolating} trained a reward model conditioned on states with expert-provided ranking information and then trained an agent that surpasses the sub-optimal demonstrator using the reward model.
Based on T-REX, D-REX \cite{brown2020better} proposed a generation method of automatically ranked demonstrations by injecting different levels of noise into the behavior cloning policy.
D-REX is highly relevant to the demonstration collection of \ModelName.
However, our main focus is not so much that we proposed a novel PIRL algorithms, or our successful adoptation of D-REX in dance generation, but rather our methods solve exploration capability deficiency plaguing existing music-conditioned 3D dance generation models, that were previously unaddressed and holds significant importance.

\section{Preliminary}\label{sec:preliminary}
Given a music-driven dance dataset $\mathcal{D}=\{(\bm{m}^i,\bm{p}^i)\}_{i=1}^N$ consisting of $N$ sequence pairs, where $\bm{m}^i\in \mathcal{M}$ is a music feature sequence, and $\bm{p}^i\in \mathcal{P}$ is the corresponding dance sequence, $\mathcal{M}$ and $\mathcal{P}$ represent the music feature space and the dance motion space, respectively.
We treat music-conditioned dance generation as a sequential decision problem \cite{sutton2018reinforcement} and model it as a Markov Decision Process (MDP) $(\mathcal{S},\mathcal{A},R,P,\gamma,T)$, where $\mathcal{S},\mathcal{A}$ represent state and action spaces, $T$ and $R$ represent the termination of the episode and reward function, and $\gamma \in (0, 1)$ represents the discount factor.
We identify two entities, the \textit{environment} and the \textit{agent}, where the environment is determined by MDP the agent is determined by the policy $\pi$.
To sufficiently consider the consistency of the dance generation sequence, we instantiate the MDP by extending the state with history information.
At the beginning of each episode, $t=0$, the dance \textit{agent} receives the initial state $s_0=\{m_{init},p_{init},m_0\}\in \mathcal{S}$, which is randomly sampled from the dataset by the \textit{environment}, where $\mathcal{S}$ is the state space with $s_t\in \mathcal{P}^{t+1}\times\mathcal{M}^{t+2}$ and $m_{init}$ and $p_{init}$ are the initial music feature and dance pose, respectively. 
Then, the agent generates an action $a_0\sim\pi(\cdot|s_0) \in \mathcal{A}$ according to the policy $\pi$, where the action space $\mathcal{A}=\mathcal{P}$ and thus $a_t=\hat{p}_t$. 
Then, the environment receives the action $\hat{p}_0$ and obtains the next state $s_1=\{m_{init},p_{init},m_0,\hat{p}_0,m_1\}$ using the deterministic state transition function $P: \mathcal{S}\times \mathcal{A}\rightarrow \Delta(\mathcal{S})=s_t.\operatorname{extend}(\{\hat{p}_t,m_{t+1}\})$.
After that, the reward $r_t$ of taking action $a_t$ at state $s_t$ is obtained from the reward function $R:\mathcal{S}\times \mathcal{A}\rightarrow \mathbb{R}$. 
The agent then continues to make decisions based on $s_1$ to determine the next action $a_1$. 
This process continues until the termination of the episode $T$, where the termination state $s_{T-1}=\{m_{init},p_{init},m_0,\hat{p}_0,\cdots,m_{T-1},\hat{p}_{T-1}\}$ is reached (there is no subsequent music feature $m_{T}$ in the termination state) and target dance sequence $\{p_{init},\hat{p}_0,\cdots,\hat{p}_{T-1}\}$ is obtained. 
The objective of our learning algorithm is to train a dance agent with optimal policy $\pi^*(a|s)$ to maximize the expected discounted return $J(\pi^*)=E_{\tau\sim\pi^*}[\sum_{t=0}^{T-1}\gamma^tr(s_t,a_t)]$, where $\tau=\{m_{init}, p_{init}, m_0,\hat{p}_0,m_1,\hat{p}_1,\cdots,m_{T-1},\hat{p}_{T-1}\}$ and $r(s_t, a_t)=R(s_t,a_t)$ means the reward value, abbreviated as $r_t$.
To simplify the notation, in the following parts, unless otherwise specified, $p_t=\hat{p}_t$.
\section{Methodology}
Our framework comprises four main steps, as illustrated in Figure \ref{fig:diagram}.
The subsequent sections provide a comprehensive depiction of each component.

\subsection{Behavior Cloning}\label{sec:bc}
To supply demonstration data for reward model training and establish the initial skill required for efficient exploration, we learn an initial policy, $\pi_{BC}$, from the human expert dataset $\mathcal{D}_{human}$ in a supervised learning manner. 
Specifically, we follow \cite{siyao2022bailando} including network architecture (\textit{i.e.}, transformer), objective (\textit{i.e.}, cross-entropy loss $\mathcal{L}_{BC}$), and action space discretization (\textit{i.e.}, VQ-VAE).

\subsection{Automatically-Ranked Demonstrations Collection}\label{sec:collect}

In this section, we will describe how to use $\pi_{BC}$ to collect automatically ranked demonstrations. Specifically, we obtain policies with performance between the behavior cloning policy $\pi_{BC}$ and a completely random policy by injecting noise of different levels into the pretrained behavior cloning policy, similar to D-REX \cite{brown2020better}.
Empirically (in Discussion section), we show that given a noise schedule $\mathcal{E}=(\epsilon_1,\epsilon_2,\cdots,\epsilon_d)$ where the $\epsilon_i, i \in \{1.\cdots,d\}$ means the noise range in $[0, 1]$ and are ordered as $\epsilon_1 > \epsilon_2 > \cdots > \epsilon_d$. Intuitively, the dance agent's performance $J(\cdot)$ is likely to have the following ranking: $J(\pi_{BC}(\cdot|\epsilon_1)) < J(\pi_{BC}(\cdot|\epsilon_2)) < \cdots < J(\pi_{BC}(\cdot|\epsilon_d))$.

In practice, we collect demonstrations by using the $\epsilon$-greedy strategy to inject noise into the policy. 
That is, at each decision-making step, the agent has a probability of $\epsilon$ to uniformly sample an action $a$ from the action space $\mathcal{A}$, and a probability of $1-\epsilon$ to decide on the action $a$ based on its learned policy $\pi_{BC}$. 
For each noise $\epsilon_i$, $K$ dance trajectories are generated to construct the dataset for training the reward model. 
Finally, the dataset contains $d\times K$ trajectories with the following ranking relationship:
\begin{equation}\label{eq:dranked}
    D_{ranked}=\{\tau_i\prec\tau_j;\tau_i\sim \pi_{BC}(\cdot|\epsilon_i),\tau_j\sim\pi_{BC}(\cdot|\epsilon_j),\epsilon_i>\epsilon_j\},
\end{equation}
where $\tau_i\prec\tau_j$ means $\tau_i$ is worse than $\tau_j$.

\subsection{Reward Model}\label{sec:rewardmodel}
In this section, given the automatically-ranked demonstrations dataset $D_{ranked}$, we will discuss the network architecture and training method of the reward model.
Intuitively, as a discriminative model, the reward model has greater potential for extrapolation tasks compared to auto-regressive models.

\begin{figure}[t]
	\centering
    \includegraphics[width=\linewidth]{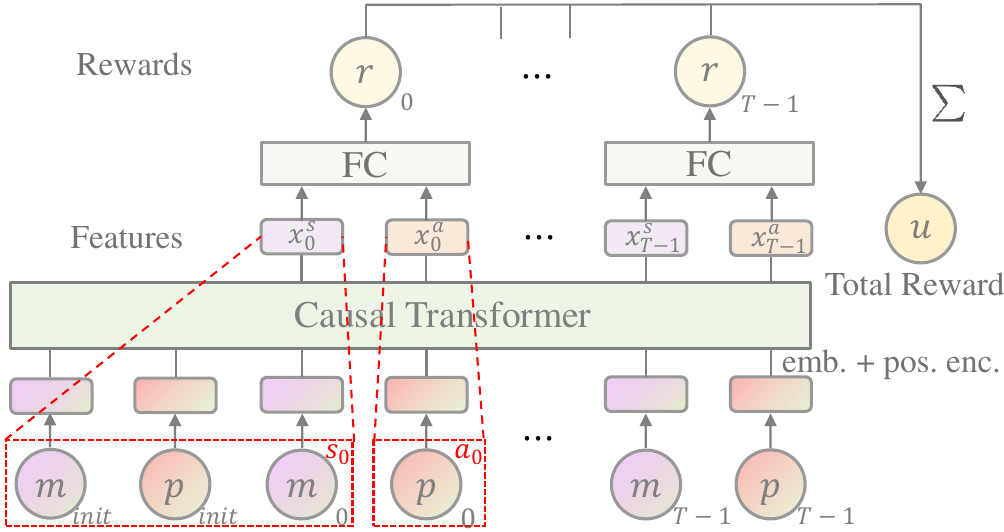}
    \vspace{-5pt}
	\caption{Overview of our reward model. 
 Tokens are generated by combining music and pose embeddings with position encoding, followed by a Causal Transformer to extract features.
The extracted features are then fed into a fully connected layer to predict the total reward.}
\vspace{-10pt}
	\label{fig:reward_model}
\end{figure}


\subsubsection{Network Architecture}
An overview of the reward model is shown in Figure \ref{fig:reward_model}.
Given a dance trajectory $\tau=\{m_{init}, p_{init}, m_0,p_0,m_1,p_1,\cdots,m_{T-1},p_{T-1}\}$ of $T$ timesteps generated by the interaction between the agent and the environment, which contains two modalities, music and dance poses, with a total length of $2(T+1)$.
We interleave the two modalities in the trajectory to ensure compatibility with the standard causal attention mechanism and feed them into the reward model $R_\theta$.
Then, we apply a linear layer for each modality to map the raw inputs into an embedding space, added by a learned timestep embedding, which is shared by different modalities embedding similar to Decision Transformer \cite{chen2021decision}. 
Subsequently, these tokens will be fed into a causal transformer with multiple layers of masked multi-head attention to produce output features with equal length $\{x^s_t,x^a_t\}_{t=0}^{T-1}$, where $x^s_t$ represents the feature of state $s_t$, and $x_t^a$ represents the feature of action $a_t$.
Then, the corresponding state and action features will be fed into a fully connected layer to generate the reward $r_t$ that can be obtained by taking action $a_t$ under the current state $s_t$. 
By applying the reward function $R_\theta(s_t,a_t)$, we obtain the total reward $u=\sum_tr_t$ for the entire sequence.

\subsubsection{Training the reward model}
For training, we first sample a pair of trajectories ${\tau_i,\tau_j}$ of different quality from the automatically ranked demonstration dataset $D_{ranked}=\{\tau_1,\cdots,\tau_m\}$, where $\tau_i\sim \pi_{BC}(\cdot|\epsilon_i),\tau_j\sim\pi_{BC}(\cdot|\epsilon_j)$ and $\epsilon_i\neq\epsilon_j$.
Next, we obtain the quantitative metric of each trajectory through the reward model, \textit{i.e.}, the total reward $u_l=\sum_{s_t,a_t \in \tau_l}R_\theta(s_t,a_t), l\in\{i,j\}$.
Here we use the total reward $u$ instead of individual reward $r_t$ as the ranking criterion because the performance of the policy are decided by the entire sequence rather than each state-action pair.
Then, we define a ranking predictor \cite{bradley1952rank} based on the reward function $R_\theta$:
\begin{equation}
    \mathrm{R}(\tau_i\prec\tau_j;\theta)=\frac{\exp\sum_{s_t,a_t\in\tau_j}R_\theta(s_t,a_t)}{\exp\sum_{s_t,a_t\in\tau_i}R_\theta(s_t,a_t)+\exp\sum_{s_t,a_t\in\tau_j}R_\theta(s_t,a_t)}.
\end{equation}
Then, we optimize the network using cross-entropy loss:
\begin{equation}\label{eq:rmloss}
    \mathcal{L}_{RM}=-\mathbb{E}_{(\tau_i,\tau_j,y)\in\mathcal{D}}[(1-y)\log\mathrm{R}(\tau_i\prec\tau_j;\theta)+y\log\mathrm{R}(\tau_j\prec\tau_i;\theta)],
\end{equation}
where $y=\text{int}(\epsilon_i<\epsilon_j)$. 
Intuitively, the cross-entropy loss trains a classifier to predict the quality of two trajectories correctly.

\subsubsection{Inference during reinforcement learning}
Consistent with the ranking criterion in the training process, in subsequent reinforcement learning, the learned reward model $R_\theta$ provides a sparse reward $\hat{r}_t$ for each state-action pair in dance sequence:
\begin{equation}\label{eq:sparsereward}
    \hat{r}_t=\left\{
        \begin{aligned}
            \sum_tr_t&, \text{if}~~~~t=T-1 \\
            0&,\text{else}
        \end{aligned}
    \right.
\end{equation}
We provide a total reward at the end of the dance sequence.

\subsection{Exploration with Reinforcement Learning}\label{sec:rl}

The policy network $\pi_{RL}$ (or $\pi_\phi$) is parameterized by $\phi$ and seeks to maximize the expected return of the trajectory $\tau$:
\begin{equation}
    \begin{aligned}
        \phi&=\argmax_{\phi}\mathbb{E}_{\tau\sim\pi_{\phi}}[R(\tau)]=\argmax_{\phi}\sum_\tau p_{\tau\sim\pi_{\phi}}(\tau)R(\tau),\\
    \end{aligned}
\end{equation}
where $p_{\tau\sim\pi_\phi}(\tau)$ represents the probability of generating trajectory $\tau$ given policy $\pi_\phi$. 
$R(\tau)=\sum_{t=0}^{T-1}\gamma^tR(s_t,a_t)$ represents the discounted return of trajectory $\tau$.
Combined with the Markov Decision Process (MDP) defined in the Preliminary section, for $p_{\tau\sim\pi_\phi}(\tau)$, we have:
\begin{equation}
    \begin{aligned}
p_{\tau\sim\pi_\phi}(\tau)&=p(s_0)\prod_{t=1}^{K-1}\pi_\phi(a_t|s_t)p(s_{t+1}|s_t,a_t) \\
&=p(m_{init},p_{init},m_0)\prod_{t=1}^{K-1}\pi_\phi(p_t|m_{init},p_{init}\cdots,m_t).\\
    \end{aligned}
\end{equation}

The music-conditioned generation is a deterministic environment, where the state and action are given and the next state is deterministic. Therefore, the $p_{\tau\sim\pi_\phi}(\tau)$ is only relative to $\pi_\phi$ and initial state. 
Additionally, since the probability of the initial state $p(s_0)=p(m_{init},p_{init},m_0)$ is determined solely by the environment and not affected by the policy parameters $\phi$, we have:
\begin{equation}
    \begin{aligned}
        \phi&=\argmax_{\phi}\sum_\tau p_{\tau\sim\pi_{\phi}}(\tau)R(\tau)\\
        &=\argmax_{\phi}\sum_\tau\prod_{t=1}^{T-1}\pi_\phi(p_t|m_{init},p_{init},m_0,\cdots,m_t)R(\tau),\\
    \end{aligned}
\end{equation}
where the second term $R(\tau)=\sum_t\gamma^t\hat{r}_t$  is determined by the learned reward model $R_\theta$. 
According to the first term, we can directly apply an auto-regressive model for dance pose generation, \textit{e.g.}, transformer, similar to \cite{chen2021decision, janner2021offline, zheng2022online, xu2022prompting}.
As shown in Figure \ref{fig:RL}, when $\pi_{RL}$
, initialized with $\pi_{BC}$, 
collects samples, the environment provides fixed music sequence $\{m_{init},m_0,\cdots,m_{T-1}\}$ and an initial action $p_{init}$.
The policy $\pi_{RL}$ generates the entire dance pose sequence $\{p_{init},p_0,\cdots,p_{T-1}\}$ in an auto-regressive manner.

\begin{figure}[t]
	\centering
    \includegraphics[width=\linewidth]{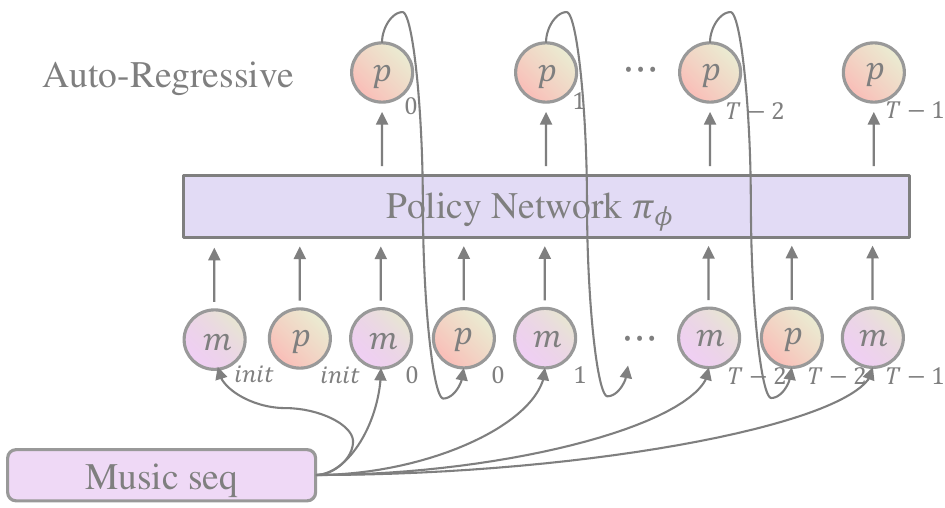}
    \vspace{-5pt}
	\caption{Overview of our policy network. Given a music sequence and an initial pose, the policy network $\pi_{\phi}$ auto-regressively predicts the subsequent poses.}
	\label{fig:RL}
 \vspace{-10pt}
\end{figure}

\begin{table*}[htpb]
 \centering

 \setlength{\tabcolsep}{8pt}
 \begin{tabular}{cccccc}\toprule
    & \multicolumn{2}{c}{Motion Quality} & \multicolumn{2}{c}{Motion Diversity}  \\
    \cmidrule(lr){2-3} \cmidrule(lr){4-5}
    & $FID_{k}\downarrow$  & $FID_{g}\downarrow$ & $DIV_{k}\uparrow$ & $DIV_{g}\uparrow$ & $BAS\uparrow$\\\midrule
    Ground-Truth                & 17.10 & 10.60 & 8.19 & 7.45 & 0.2484  \\\midrule
    FACT \cite{li2021ai}                        & 37.31 & 34.87 & 5.75 & 5.47 & 0.2175  \\
    Bailando \cite{siyao2022bailando}               & 28.62 & 9.95 & 6.27 & 6.22 & 0.2220 \\
    \ModelName~(Ours)                   & \textbf{26.25} &  \textbf{8.94} & \textbf{7.96} & \textbf{6.49} & \textbf{0.2232} \\\bottomrule
 \end{tabular}
    \caption{Evaluation results on test set of different dance generation frameworks. 
 To ensure a fair comparison with baselines, we report the results of \cite{siyao2022bailando} without RL fine-tuning on the test set. 
 }
 \vspace{-10pt}
 \label{tab:experiments}
\end{table*}

The training of the policy adopts the Proximal Policy Optimization algorithm (PPO) \cite{schulman2017proximal}. 
More details are provided in Appendices.

\section{Experiements}


\subsection{Experiments Setup}

\subsubsection{Dataset}

We conduct the training and experiments on the AIST++ dataset \cite{li2021ai}, which is the largest public available dataset for aligned 3D dance motions and music.
AIST++ dataset includes 992 60-Frame Per Second (FPS) 3D dance motion sequences in SMPL \cite{SMPL:2015} format. 
In line with \cite{li2021ai, siyao2022bailando}, we split these data into 952 sequences for training and 40 sequences for subsequent experiments.

\subsubsection{Implementation Details}

For audio preprocess, we employ Librosa to extract music features.
Specifically, we extract the following features: Mel Frequency Cepstral Coefficients (MFCC), MFCC delta, constant-Q chromagram, tempogram, and onset strength, yielding a 438-dimensional musical feature vector.
More details and hyper-parameter settings can be found in Appendices.
 
\subsection{Comparisons with State-Of-The-Arts}

We compare \ModelName~to state-of-the-art including FACT~\cite{li2021ai} and Bailando~\cite{siyao2022bailando}, which is also our behavior cloning policy. 
Following \cite{siyao2022bailando}, we generate 40 dance clips for each method in the AIST++ test set and cut the generated dances into the length of 20 seconds for further experiments.

We conduct objective evaluations following \cite{siyao2022bailando}, including the quality and diversity of generated dances, and the alignment score between the dance and music beats.
Specifically, for the quality of generated dances, we calculate the Fréchet Inception Distance ($FID$) \cite{heusel2017gans} between the generated dance and all dance sequences of AIST++ dataset on the kinetic feature ($FID_{k}$) \cite{onuma2008fmdistance} and the geometric feature ($FID_{g}$) \cite{muller2005efficient}. 
For the diversity, we calculate the average Euclidean distance ($DIV$) \cite{li2021ai} of the generated dances on the kinetic feature ($DIV_{k}$) and the geometric feature ($DIV_{g}$). 
For the alignment between the dance and music beats, we calculate the Beat Align Score ($BAS$) \cite{liu2022learning, siyao2022bailando}.


Table \ref{tab:experiments} report the comparison with state-of-the-art methods.
According to the comparison, the proposed \ModelName~outperforms baseline frameworks in all aspects, demonstrating the effectiveness of the exploration.
Specifically, with exploration, \ModelName~improves 8.28\% and 10.15\% than the BC policy Bailando on $FID_k$ and $FID_g$, respectively.
This indicates that the reward model prefers movements that are more similar to those of humans and high-quality.
And for the motion diversity, exploration helps the policy improve 21.23\% and 4.16\% on $DIV_k$ and $DIV_g$, respectively.
The results on $BAS$ also indicate the improvement of our method.
More comparisons and visualizations in wild musics are available in demo page: {https://sites.google.com/view/e3d2}.
\begin{figure*}[t]
\centering
  \includegraphics[width=0.7\textwidth]{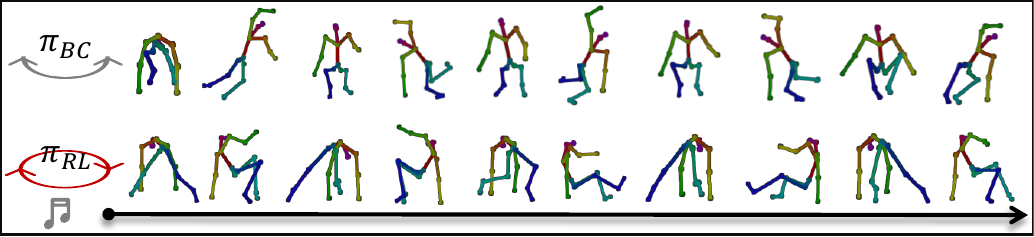}
  \vspace{-5pt}
  \caption{Typical examples. \textit{Top}: 
  Dance movements from the behavior cloning policy; \textit{Bottom}: Dance movements from reinforcement learning policy. 
  Through the exploration, the RL dance agent can make more complex and challenging movements aligned with human preferences, such as "The Thomas Flair", a well-known and demanding movement in street dance.}
  \vspace{-10pt}
  \label{fig:exploration}
\end{figure*}

\section{Discussion}


This section provides a comprehensive analysis of the reward model, including the effectiveness, advantages over the hand-designed reward, the empirical soundness of the training process, and accuracy and generalization.

\subsection{Does exploration provide more diversity and alignment?}\label{sec:rmrole}


As shown in Table \ref{tab:experiments}, the dance generated by \ModelName~ achieves the highest values for both Divergence ($DIV_g$ and $DIV_k$), indicating the diversity of dance poses from \ModelName.
Furthermore, to present the alignment with human preference brought by exploration, we visualize a typical example as demonstrated in Figure \ref{fig:exploration} (more visual comparisons are in the demo page).
The dance sequences above are generated by a BC agent, while the sequences below are generated by an agent finetuned by RL.
Thanks to the exploration, the finetuned dance agent generates complicated and challenging movements aligned with human preferences, such as the \textit{The Thomas Flair}, a challenging move in street dance.
This action is present in the human dataset, but only the RL agent rather than BC agents, did perform it.
It is worth noting that during the RL finetuning, the agent did not have access to any human dataset.
We believe that this phenomenon occurs because, the objective of the behavior cloning policy, predicting the next pose code of dance sequences in the training set, is different from the objective of humans preferring dance sequences.
The frequencies of occurrence of different movements in inference are consistent with their frequencies in the training set, rather than aligned with human preferences, leading to the ordinary movements in the inference process (as \textit{Top} movements in this example).
Later on, during the process of RL, the policy is optimized to act in accordance with human preference through the guidance of the reward model.

\subsection{Is a learned reward function more effective than a hand-designed one?}\label{sec:hdreward}

To explore the differences between learned and hand-designed reward, we compare a hand-designed reward proposed in \cite{siyao2022bailando}:
\begin{equation}
    r = r_b+\gamma_cr_c,
\end{equation}
where $r_b$ and $r_c$ represent the beat alignment reward and orientation reward, respectively. The former aligns dance movements with the rhythm of music, while the latter constrains the consistency of the upper and lower half body.
$\gamma_c$ is the balance weight.

Table \ref{tab:hdreward} presents various evaluation results of dance generation from the agent trained with the hand-designed reward. 
As the interaction steps increase, the dance gradually deviates from human movement patterns and its diversity is not as good as the behavior cloning policy. 
This is because the hand-designed reward only considers the beat alignment and consistency of upper and lower body movements, which results in a lack of diversity and similarity to human movements being overlooked.
Besides, the design of reward requires a lot of task-specific prior knowledge \cite{wirth2017survey,liu2022metarewardnet}.
In contrast, our reward is learned from automatically ranked demonstrations without any domain knowledge. 
This reward is expected to implicitly learn various aspects of the dance, allowing a comprehensive evaluation of the dance and a correct optimization of the dance policy. 

\begin{table}[h]
\begin{center}
\scalebox{0.85}{
\begin{tabular}{cccccc}
\toprule
Steps & $FID_k\downarrow$ & $FID_g\downarrow$ & $DIV_k\uparrow$ & $DIV_g\uparrow$ & $BAS\uparrow$  \\ \hline
0M    & 28.62   & 9.95    & 6.27    & 6.22    & 0.2220 \\
1M    & 45.39   & 15.41   & 4.17    & 3.49    & 0.2338 \\
2M    & 46.25   & 17.20   & 4.63    & 3.46    & 0.2374 \\
3M    & 43.10   & 18.59   & 4.82    & 2.90    & 0.2283 \\
4M    & 47.80   & 22.15   & 4.97    & 2.47    & 0.2388 \\
5M    & 56.30   & 24.58   & 5.52    & 3.56    & 0.2442 \\ \bottomrule
\end{tabular}}
\caption{
Performance of hand-designed reward. `Steps' is the interaction numbers between the agent and the environment. The hand-designed reward only considers $BAS$ (the effect of orientation reward is not included), leading to decreasing performance on other metrics during the optimization.
}
\label{tab:hdreward}
\end{center}
 \vspace{-10pt}
\end{table}

\begin{table}[]
\centering
\scalebox{0.8}{\begin{tabular}{ccccccc}
\toprule
$\epsilon$ & $FID_k\downarrow$ & $FID_g\downarrow$ & $DIV_k\uparrow$ & $DIV_g\uparrow$ & $BAS\uparrow$ & $\overline{u}$  \\ \hline
0.02       & 13.94   & 2.71    & 8.01    & 6.20    & 0.2782 
 & 206.31\\
0.25       & 40.45   & 22.39   & 4.41    & 2.40    & 0.2501 
 & 127.68\\
0.50       & 48.59   & 29.80   & 3.72    & 1.61    & 0.2547 
 & 52.09\\
0.75       & 53.79   & 33.35   & 3.31    & 1.32    & 0.2451 
 & -20.24\\
1.00       & 57.18   & 35.67   & 3.04    & 1.17    & 0.2427 
 & -91.53\\ \bottomrule
\end{tabular}}
\caption{
Ablation on the impact of noise in the training set. The performance of the BC policy gradually decreases as the noise level increases. $\overline{u}$ represents the average total reward across all trajectories in the training set.}
\label{tab:noiseresult}
\end{table}

\subsection{Higher level noise leads to the worse demonstrations?}\label{sec:noiseintensity}

The learned reward model is based on the assumption that the behavior cloning policy significantly outperforms a completely random policy and that increasing levels of noise lead to an increasingly worse policy.
To validate this assumption, we select different noisy levels $[0.02, 0.25, 0.50, 0.75, 1.00]$ and evaluate the performance of behavior cloning policies injected with these noises on the training set music.
As shown in Table \ref{tab:noiseresult}, as the noise level increases, the quality of the generated dances generally decreases.


\begin{figure}[t]
	\centering
    \includegraphics[width=0.75\linewidth]{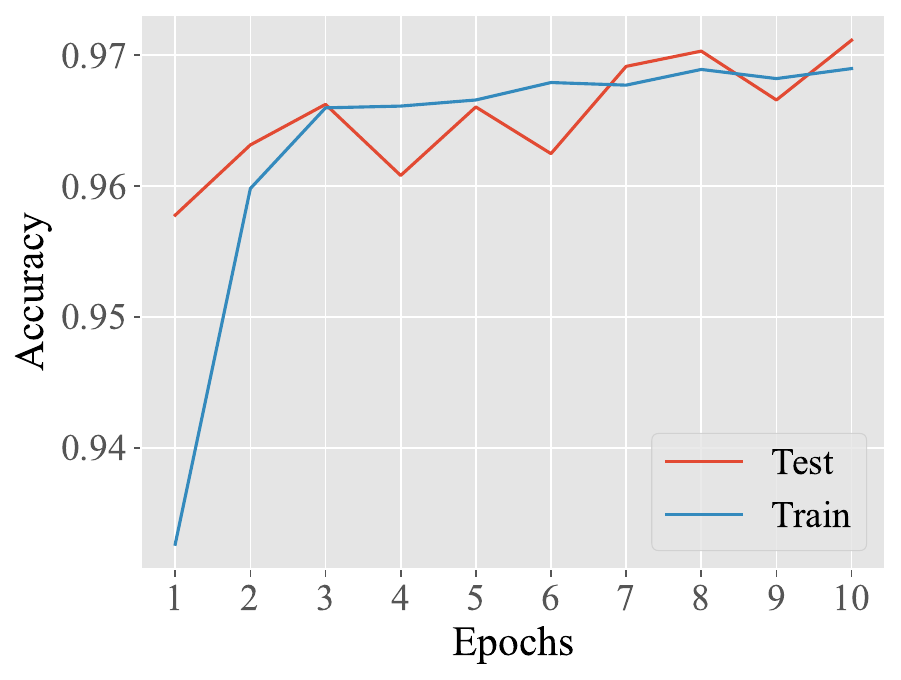}
     \vspace{-5pt}
	\caption{Reward model accuracy: The classification accuracy of the reward model on dances generated by policies with varying levels of noise during training. The reward model exhibits excellent generalization on the test set.}
 \vspace{-10pt}
	\label{fig:rmacc}
\end{figure}

\subsection{What is the performance of the reward model?}\label{sec:rmperformance}

\begin{table}[]
\centering
\scalebox{0.9}{
\begin{tabular}{ccc}
\toprule
Dataset      & Complete Pose & Partial Pose \\ \hline
Music Seen   & 54.69\%       & 73.44\%      \\
Music Unseen & 2.32\%        & 7.52\%       \\ \bottomrule
\end{tabular}}
\caption{Pose prediction accuracy.
We evaluate the behavior cloning policy on both seen and unseen music.
`Complete Pose': both the codes of upper and lower half bodies are correct; `Partial Pose': at least one code is correct.
These results demonstrate the limited generalization capabilities of supervised learning approaches.
}
\label{tab:bcaccuracy}
\end{table}

\begin{table}[]
\centering
\scalebox{0.85}{
\begin{tabular}{cccccc}
\toprule
Dataset & $FID_k\downarrow$ & $FID_g\downarrow$ & $DIV_k\uparrow$ & $DIV_g\uparrow$ & $BAS\uparrow$   \\ \hline
Music Seen       & 8.48   & 1.88    & 8.28    & 6.86    & 0.2854 \\
Music Unseen       & 28.62   & 9.95   & 6.27    & 6.22    & 0.2220  
\\ \bottomrule
\end{tabular}}
\caption{
Performance of behavior cloning policy on seen and unseen music. The significant gap indicates the limited generalization of supervised learning approaches.
}
\vspace{-10pt}
\label{tab:musicseenunseenbc}
\end{table}

Table \ref{tab:noiseresult} demonstrates that the reward model can provide appropriate evaluations, \textit{i.e.}, total reward, for the dance sequences.
For generalization, Figure \ref{fig:rmacc} illustrates the classification accuracy of the reward model on dances with different rankings conditioned on training and test set musics during the training process.
At the final epoch, the reward model achieves an accuracy of around 97\% on both the training and test sets.
In contrast, we evaluate the pose code prediction accuracy of the BC policy on training and test sets, with the results shown in Table \ref{tab:bcaccuracy}.
`Complete Pose' refers to the correct prediction of both the upper and lower half body codes, while `Partial Pose' indicates that at least one code is correct.
The BC policy suffers from severe overfitting to the training set and struggles to generalize to unseen music, achieving only 2.32\% complete pose accuracy and 7.52\% partial pose accuracy.
Moreover, we assess the performance of the BC policy in generating dance sequences under the conditions of seen and unseen music, as shown in Table \ref{tab:musicseenunseenbc}.
The significant gap indicates the weak generalization of the supervised learning approaches.
In comparison, the reward model exhibits excellent generalization performance.
A plausible explanation for these findings is that the reward model, as a discriminative model, easily achieves better generalization than the generative behavior cloning model, enabling reinforcement learning policy to demonstrate superior generalization compared to behavior cloning policy.

\section{Conclusion}
In this paper, to address the problem of the lack of exploration ability in current music-driven dance models, we propose a novel dance generation framework, E3D2. 
We first train a reward model on automatically ranked dance demonstrations, and then, we train the dance policy using reinforcement learning with the learned reward model, resulting in more diverse and human-aligned dances.
Extensive experiments demonstrate the effectiveness of \ModelName.
\clearpage

\bibliography{aaai24}

\onecolumn
\appendix

\section{Theory and Proof}

\subsection{Theorem 1}
We consider that the learned reward model is composed of a feature extraction function $\zeta(\cdot,\cdot)$ and a linear transformation $w$, such as a neural network, where $R(s_t,a_t)=w^\mathrm{T}\zeta(s_t,a_t)$, the learning objective of reinforcement learning, given $R(s_t,a_t)$, can be expressed as the expected discounted return:

\begin{equation}
    J(\pi|R)=\mathbb{E}_{\pi}\left[\sum_{t=0}^{T-1}\gamma^tR(s_t,a_t)\right]=w^\mathrm{T}\mathbb{E}_{\pi}\left[\sum_{t=0}^{T-1}\gamma^t\zeta(s_t,a_t)\right]=w^\mathrm{T}\bm{\zeta}_\pi,
\end{equation}
where $\bm{\zeta}_\pi$ represents the expected discounted features given policy $\pi$.

\begin{theorem}
Let the learned PIRL reward function be $R_\theta(s_t,a_t) = w^\mathrm{T} \zeta(s_t,a_t)$ and the true reward function be $R^*(s_t,a_t) = R_\theta(s_t,a_t) + e(s_t,a_t)$ for some error function $e : \mathcal{S} \rightarrow \mathbb{R}$, and $\|w\|_1 \leq 1$.
Then extrapolation beyond the behavior cloning policy $\pi_{BC}$, i.e., $J(\hat{\pi}|R^*) > J(\pi_{BC}| R^*)$, is guaranteed if : 
\begin{equation}
J(\pi^*_{R^*}|R^*) - J(\pi_{BC}|R^*) > \varepsilon_{\bm{\zeta}}+\frac{2 \|e\|_\infty}{1 - \gamma},
\end{equation}
where $\pi^*_{R^*}$ is the optimal policy under optimal reward function $R^*$, $\varepsilon_{\bm{\zeta}} = \|\bm{\zeta}_{\pi^*_{R^*}} - \bm{\zeta}_{\hat{\pi}}\|_\infty$ and $\|e\|_\infty=\sup\left\{\,\left|e(s_t,a_t)\right|:s_t\in \mathcal{S},a_t\in\mathcal{A}\,\right\}$.
\end{theorem}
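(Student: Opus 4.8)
The plan is to route everything through the learned reward $R_\theta$ as an intermediary, absorbing the replacement of $R^*$ by $R_\theta$ into the $\|e\|_\infty$ budget and the replacement of the true-optimal policy $\pi^*_{R^*}$ by the learned policy $\hat{\pi}$ into the feature budget $\varepsilon_{\bm{\zeta}}$. First I would record the two elementary estimates that drive the whole argument. Writing $J(\pi|e)=\mathbb{E}_{\pi}\!\left[\sum_{t=0}^{T-1}\gamma^t e(s_t,a_t)\right]$, linearity of expectation gives the decomposition $J(\pi|R^*)=J(\pi|R_\theta)+J(\pi|e)$ for every policy $\pi$. Since $|e(s_t,a_t)|\le\|e\|_\infty$ pointwise, summing the geometric series yields the uniform bound
\[
\bigl|J(\pi|e)\bigr|\le\|e\|_\infty\sum_{t=0}^{T-1}\gamma^t\le\frac{\|e\|_\infty}{1-\gamma},
\]
valid for any $\pi$. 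These are the two ``reward-side'' facts I will invoke twice.

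Next I would convert the feature discrepancy $\varepsilon_{\bm{\zeta}}$ into a gap in learned-reward value. Because $J(\pi|R_\theta)=w^{\mathrm T}\bm{\zeta}_\pi$ is linear in the expected discounted features, Hölder's inequality together with the constraint $\|w\|_1\le 1$ gives
\[
\bigl|J(\pi^*_{R^*}|R_\theta)-J(\hat{\pi}|R_\theta)\bigr|=\bigl|w^{\mathrm T}(\bm{\zeta}_{\pi^*_{R^*}}-\bm{\zeta}_{\hat{\pi}})\bigr|\le\|w\|_1\,\|\bm{\zeta}_{\pi^*_{R^*}}-\bm{\zeta}_{\hat{\pi}}\|_\infty\le\varepsilon_{\bm{\zeta}}.
\]
This is the single place where the $\ell_1$ budget on $w$ is genuinely used, and I expect it to be the conceptual crux of the proof: it is what lets a quantity defined purely in feature space control the difference in realized returns under $R_\theta$.

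Chaining these, I would lower-bound the true-reward value of the learned policy in three moves. Starting from $J(\hat{\pi}|R^*)\ge J(\hat{\pi}|R_\theta)-\|e\|_\infty/(1-\gamma)$, I apply the Hölder estimate to swap $\hat{\pi}$ for $\pi^*_{R^*}$ at a cost of $\varepsilon_{\bm{\zeta}}$, and then convert $J(\pi^*_{R^*}|R_\theta)$ back to $J(\pi^*_{R^*}|R^*)$ at a further cost of $\|e\|_\infty/(1-\gamma)$, obtaining
\[
J(\hat{\pi}|R^*)\ge J(\pi^*_{R^*}|R^*)-\varepsilon_{\bm{\zeta}}-\frac{2\|e\|_\infty}{1-\gamma}.
\]
The hypothesis states precisely that the right-hand side strictly exceeds $J(\pi_{BC}|R^*)$, so $J(\hat{\pi}|R^*)>J(\pi_{BC}|R^*)$, which is the claimed extrapolation. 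The one subtlety I would flag is that this chain only certifies improvement when $\varepsilon_{\bm{\zeta}}$ is a small, meaningful quantity, which requires $\hat{\pi}$ to be (near-)optimal for $R_\theta$ so that its expected features track those of $\pi^*_{R^*}$; supplying that near-optimality is exactly the role of the reinforcement-learning stage, and it is implicitly assumed whenever $\varepsilon_{\bm{\zeta}}$ is treated as small.
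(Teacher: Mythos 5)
Your proposal is correct and follows essentially the same route as the paper's proof: both decompose the return under $R^*$ into the linear feature term $w^{\mathrm T}\bm{\zeta}_\pi$ plus the error contribution, bound the latter by $\tfrac{2\|e\|_\infty}{1-\gamma}$ via the geometric series, and apply H{\"o}lder's inequality with $\|w\|_1\le 1$ to control the feature gap by $\varepsilon_{\bm{\zeta}}$. The only difference is presentational — you lower-bound $J(\hat{\pi}|R^*)$ directly while the paper upper-bounds $J(\pi^*_{R^*}|R^*)-J(\pi_{RL}|R^*)$ — and your closing remark that the bound is only meaningful when $\hat{\pi}$ is near-optimal for $R_\theta$ is a fair observation the paper leaves implicit.
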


\begin{proof}
To prove that the performance of $\pi_{RL}$ surpasses that of $\pi_{BC}$, i.e., $J(\pi_{RL}|R^*)>J(\pi_{BC}|R^*)$, we can compare the performance gap between these two policies and the optimal policy. 
Let $\varrho = J(\pi^*_{R^*}|R^*)-J(\pi_{BC}|R^*)$ as the optimal gap between the optimal policy and behavior policy, if we can prove that $J(\pi^*_{R^*}|R^*)-J(\pi_{RL}|R^*)<\varrho$, then we have $J(\pi_{RL}|R^*)>J(\pi_{BC}|R^*)$.
\begin{equation}
\begin{aligned}
J(\pi^*_{R^*}|R^*)-J(\pi_{RL}|R^*)&=\mathbb{E}_{\pi^*}[\sum_{t=0}^{T-1}\gamma^tR^*(s_t,a_t)]-\mathbb{E}_{\pi_{RL}}[\sum_{t=0}^{T-1}\gamma^tR^*(s_t,a_t)]\\
&=\mathbb{E}_{\pi^*}[\sum_{t=0}^{T-1}\gamma^t(R_\theta(s_t,a_t)+e(s_t,a_t))]-\mathbb{E}_{\pi_{RL}}[\sum_{t=0}^{T-1}\gamma^t(R_\theta(s_t,a_t)+e(s_t,a_t))]\\
&=\mathbb{E}_{\pi^*}[\sum_{t=0}^{T-1}\gamma^t(w^\mathrm{T}\zeta(s_t,a_t)+e(s_t,a_t)]-\mathbb{E}_{\pi_{RL}}[\sum_{t=0}^{T-1}\gamma^t(w^\mathrm{T}\zeta(s_t,a_t)+e(s_t,a_t))]\\
&=w^\mathrm{T}\mathbb{E}_{\pi^*}[\sum_{t=0}^{T-1}\gamma^t\zeta(s_t,a_t)]+\mathbb{E}_{\pi^*}[\sum_{t=0}^{T-1}\gamma^te(s_t,a_t)]-w^T\mathbb{E}_{\pi_{RL}}[\sum_{t=0}^{T-1}\gamma^t\zeta(s_t,a_t)]-\mathbb{E}_{\pi_{RL}}[\sum_{t=0}^{T-1}\gamma^te(s_t,a_t)]\\
&=w^\mathrm{T}\bm{\zeta}_{\pi^*}+\mathbb{E}_{\pi^*}[\sum_{t=0}^{T-1}\gamma^te(s_t,a_t)]-w^T\bm{\zeta}_{\pi_{RL}}-\mathbb{E}_{\pi_{RL}}[\sum_{t=0}^{T-1}\gamma^te(s_t,a_t)]\\
&=w^\mathrm{T}(\bm{\zeta}_{\pi^*}-\bm{\zeta}_{\pi_{RL}})+\mathbb{E}_{\pi^*}[\sum_{t=0}^{T-1}\gamma^te(s_t,a_t)]-\mathbb{E}_{\pi_{RL}}[\sum_{t=0}^{T-1}\gamma^te(s_t,a_t)]\\
&\leq w^\mathrm{T}(\bm{\zeta}_{\pi^*}-\bm{\zeta}_{\pi_{RL}})+(\sup_{s_t\in\mathcal{S},a_t\in\mathcal{A}} e(s_t,a_t)-\inf_{s_t\in\mathcal{S},a_t\in\mathcal{A}} e(s_t,a_t))\sum_{t=0}^{T-1}\gamma^t\\
&=w^\mathrm{T}(\bm{\zeta}_{\pi^*}-\bm{\zeta}_{\pi_{RL}})+(\sup_{s_t\in\mathcal{S},a_t\in\mathcal{A}} e(s_t,a_t)-\inf_{s_t\in\mathcal{S},a_t\in\mathcal{A}} e(s_t,a_t))\frac{1-\gamma^T}{1-\gamma}\\
&\leq w^\mathrm{T}(\bm{\zeta}_{\pi^*}-\bm{\zeta}_{\pi_{RL}})+\frac{(\sup_{s_t\in\mathcal{S},a_t\in\mathcal{A}} e(s_t,a_t)-\inf_{s_t\in\mathcal{S},a_t\in\mathcal{A}} e(s_t,a_t))}{1-\gamma}\\
&\leq w^\mathrm{T}(\bm{\zeta}_{\pi^*}-\bm{\zeta}_{\pi_{RL}})+\frac{2\|e\|_{\infty}}{1-\gamma}\\
&\leq \|w\|_1\|\bm{\zeta}_{\pi^*}-\bm{\zeta}_{\pi_{RL}}\|_{\infty}+\frac{2\|e\|_{\infty}}{1-\gamma}    \qquad \text{H{\"o}lder's inequality}\hfill\\
&\leq \varepsilon_{\bm{\zeta}}+\frac{2\|e\|_{\infty}}{1-\gamma}.
\end{aligned}
\end{equation}

So if $J(\pi^*_{R^*}|R^*)-J(\pi_{BC}|R^*) = \varrho > \varepsilon_{\bm{\zeta}}+\frac{2\|e\|_{\infty}}{1-\gamma}$, then $J(\pi_{R^*}|R^*)-J(\pi_{BC}|R^*) > J(\pi^*_{R^*}|R^*)-J(\pi_{RL}|R^*)$, thus, we have $J(\pi_{RL}|R^*) > J(\pi_{BC}|R^*)$.

\end{proof}

Our proof process draws reference from \cite{brown2020better}.

\section{Visualization of Markov Decision Process}

Figure \ref{fig:MDP} illustrates the Markov Decision Process (MDP) defined in Sec. \ref{sec:preliminary} under the music-conditioned 3D dance generation environment.
The state space $\mathcal{S}$ with $s_t\in \mathcal{P}^{t+1}\times\mathcal{M}^{t+2}$, which means the Cartesian product of $t+1$ dance motion spaces and $t+2$ music feature spaces.
And the action space $\mathcal{A}=\mathcal{P}$, which is discretized with the VQ-VAE, following \cite{siyao2022bailando}.

\begin{figure}[h]
    \centering
    \includegraphics[width=0.6\linewidth]{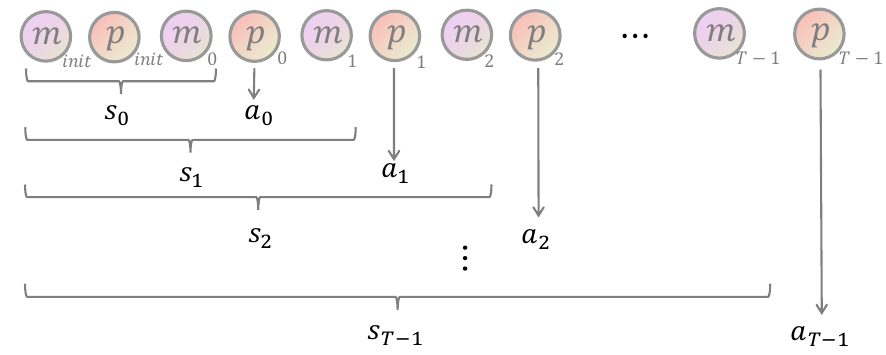}
    \caption{The visualization of the Markov Decision Process (MDP).}
    \label{fig:MDP}
\end{figure}

\section{Complete Algorithm}

Algorithm \ref{alg:e3d2} and Algorithm \ref{alg:rl} showcase the complete algorithmic flow of our proposed \ModelName~framework and the elaborated exploration with reinforcement leanring, respectively.

\begin{figure}[h] 
\begin{minipage}{\linewidth}
\begin{algorithm}[H]
{
\caption{E3D2: Exploratory 3D Dance generation framework}
\label{alg:e3d2}
\begin{algorithmic}[1]
    \State \textbf{Input:} Expert dancer dataset $\mathcal{D}_{human}$, noise schedule $\mathcal{E}$, learning rate $\alpha_0$
    \State Initialize behavior cloning policy $\pi_{\psi}$, reward model network $R_\theta$
    \State \textbf{Output:} Reinforcement learning policy $\pi_{\phi}$
    \State Train the behavior cloning policy $\pi_{BC}$ with $\mathcal{D}_{human}$ in supervised learning\Comment{Sec.~\ref{sec:bc}}
    \For{$\epsilon_i \in \mathcal{E}$}\Comment{Sec.~\ref{sec:collect}}
        \State Run behavior policy with noise $\pi_\psi(\cdot|\epsilon_i)$ in the environment
    \EndFor
    \State Generate automatically-ranked demonstrations dataset $\mathcal{D}_{ranked}$\Comment{Eq.~\ref{eq:dranked}}
    \For{$l=1$ to $L$}\Comment{Sec.~\ref{sec:rewardmodel}}
        \State sample $\tau_i$, $\tau_j$ from $\mathcal{D}_{ranked}$
        \State Update the reward model:
        $$\theta \leftarrow \theta - \alpha_0 \nabla_\theta\mathcal{L}_{RM}$$\Comment{Eq.~\ref{eq:rmloss}}
    \EndFor
    \State Run the Algorithm \ref{alg:rl} with the reward model $R_\theta$ to obtain reinforcement learning policy $\pi_{\phi}$\Comment{Sec.~\ref{sec:rl}}
\end{algorithmic}
}
\end{algorithm}
\end{minipage}
\end{figure}

\begin{figure}[t!] 
\vskip -0.1in
\begin{minipage}{\linewidth}
\begin{algorithm}[H]
{
\caption{Exploration with Reinforcement Learning}
\label{alg:rl}
\begin{algorithmic}[1]
    \State \textbf{Input:}  behavior cloning policy $\pi_\psi$, reward model $R_\theta$, and Dataset $\mathcal{D}=\{(\bm{m}^i,p_{init}^i)\}_{i=1}^M$ of size $M$
    \State Initialize target policy $\pi_{\phi}$, value network $V_{\delta}$,  learning rate $\alpha_{1}, \alpha_{2}$, discount factor $\gamma$
    \For{$n=0$ to $N$} 
        \State Sample  state $(\bm{m},p_{init})$ from $\mathcal{D}$
        \For{$t=0$ to $T-1$}
            \State Run target policy $\pi_{\phi}$ in environment
        \EndFor
        \State Compute sparse rewards $\mathscr{r}$ with KL penalty
        \State Compute advantage estimates $A_0,...,A_{T-1}$
        \State Update the target policy:
        $$\phi \leftarrow \phi-\alpha_1\nabla_\phi\mathcal{L}_{actor}$$
        \State Update the value function:
        $$\delta \leftarrow \delta-\alpha_2\nabla_\delta\mathcal{L}_{critic}$$
    \EndFor
\end{algorithmic}
}
\end{algorithm}
\vspace{-4ex}
\end{minipage}
\end{figure}

\section{Implementation details}
The experiments were run on one NVIDIA A100 GPU.

\subsection{Behavior Cloning}
To ensure a fair comparison, we employed the same hyper-parameters and other implementation details for the behavior cloning policy as those used in Bailando \cite{siyao2022bailando} and specifically referred to the official open-source repository\footnote{https://github.com/lisiyao21/Bailando}.

\subsection{Automatically-Ranked Demonstrations Collection}


After obtaining the behavior cloning policy, $\pi_{BC}$, we collect demonstrations by employing the $\epsilon$-greedy strategy to inject noise into the policy.
At each decision-making step, the agent has a probability of $\epsilon$ to uniformly sample an action $a$ from the action space $\mathcal{A}$, and a probability of $1-\epsilon$ to decide on the action $a$ based on its learned policy $\pi_{BC}$.

In our experiments, we utilized a noise schedule of $\mathcal{E}=\{0.02, 0.25, 0.50, 0.75, 1.00\}$. 
When $\epsilon=1.00$, the policy $\pi(\cdot|\epsilon=1.00)$ becomes entirely random. 
For every piece of music in the $\mathcal{D}_{human}$ dataset, the generated policies interacted with the environment, which resulted in the acquisition of automatically ranked demonstrations across five distinct levels.
To generate our dataset $\mathcal{D}_{ranked}$, we randomly selected two demonstrations with differing levels (potentially driven by different pieces of music) and cropped them into snippets of length 2$\times$30, with 30 tokens for each modality as inputs to the reward model. 
The ground truth for the reward model was determined by the noise level of the demonstrations.
We created 30,000 snippets for both the training and test sets to form the dataset $\mathcal{D}_{ranked}$.


\subsection{Reward Model}
For each demonstration snippet, we remove the $m_{init}$ $p_{init}$ as they are not decided by the policy, forming the snippet with length 2$\times$29, equaling to the context length of the policy.
The specific hyper-parameter settings are shown in the Table \ref{tab:hprm}.

\begin{table}[h]
\centering
\caption{Hyper-parameters of the reward model.}
\begin{tabularx}{0.6\textwidth}{C|C}
\toprule
\textbf{Hyper-parameter}   & \textbf{Value}                              \\ \hline
Hidden Dim        & 768                                \\
Attention Heads   & 12                                 \\
Context Length    & 2 $\times$ 29                        \\
Action Dim        & 512                                \\
Music Dim         & 438                                \\
Block Num         & 1                                  \\
Layer Norm First  & False                              \\
Causal            & True                               \\
Attention Dropout & 0.1                                \\
Residual Dropout  & 0.1                                \\
Optimizer         & Adam                               \\
Learning Rate     & 5e-4                               \\
Weight Decay      & 2.5e-3                             \\
Epoch Num         & 10                                 \\
Batch Size        & 32                                \\
Clip Grad Norm    & 0.25                               \\
Seed              & 0                                  \\ \bottomrule
\end{tabularx}
\label{tab:hprm}
\end{table}

\subsection{Exploration with Reinforcement Learning}

The reinforcement learning policy, $\pi_{RL}$, shares the same architecture as the behavior cloning policy, $\pi_{BC}$, and is initialized using the weights of $\pi_{BC}$. 
This approach is common in many deep reinforcement learning algorithms that employ the Actor-Critic framework.
Both the value network and policy network share an encoder, specifically the first six layers of the Transformer block within the policy network. 
This shared representation helps improve the efficiency and consistency of the learning process.
Detailed hyper-parameter settings for this architecture can be found in Table \ref{tab:hprl}, where the capacity of the buffer is equal to the number of music segments in the training set, as the fact that PPO is an on-policy algorithm and cannot utilize excessively old experiences.


\begin{table}[h]
\centering
\caption{Hyper-parameters of the reinforcement learning.}
\label{tab:hprl}
\begin{tabularx}{0.6\textwidth}{C|C|C}
\toprule
                               & Hyper-parameter     & Value \\ \hline
\multirow{6}{*}{Value Network} & Context Length      & 29    \\
                               & Attention Heads     & 12    \\
                               & Hidden Dim          & 768   \\
                               & Music Emb Dim       & 768   \\
                               & Action Emb Dim      & 768   \\
                               & Block Num           & 3     \\ \hline
\multirow{5}{*}{Buffer}        & Capacity            & 34319 \\
                               & Trajectory Length   & 29    \\
                               & GAE Lambda          & 0.95  \\
                               & Gamma               & 1.0    \\
                               & KL Penalty Beta     & 5e-3  \\ \hline
\multirow{6}{*}{Enviroment}    & Reward Function     & Reward Model \\
                               & Music Normalization & False \\
                               & Train Shuffle       & True \\
                               & Test Shuffle        & False \\
                               & Train Batch Size    & 512 \\
                               & Test Batch Size     & 1 \\ \hline
\multirow{7}{*}{Optimization}         & Train Iters         & 1     \\
                               & Entropy Loss Weight & 0.0   \\
                               & Value Loss Weight   & 0.1   \\
                               & Clip Ratio          & 0.2   \\
                               & Max Grad Norm       & 0.5   \\
                               & Optimizer           & AdamW \\
                               & Learning Rate       & 3e-4  \\ \hline
                               & Seed                & 0     \\ \bottomrule
\end{tabularx}
\end{table}

\section{Background of Reinforcement Learning}

\subsection{From RL to PIRL}
Reinforcement learning aims to learn a mapping from states to actions to maximize a reward signal \cite{sutton2018reinforcement}. 
However, this assumes that the reward function is available, which severely limits the applicability of reinforcement learning.
To address this issue, inverse reinforcement learning (IRL) was proposed \cite{ng2000algorithms} as the inverse problem of reinforcement learning. 
Its goal is to learn a reward function from expert demonstrations, \textit{i.e.}, sequences of state-action pairs, to explain the expert's behavior \cite{ng2000algorithms,abbeel2004apprenticeship}.
Nevertheless, IRL assumes that all examples are provided by the expert and that their appropriateness is consistent.
To address this limitation, \cite{cheng2011preference} introduced preference-based learning to the reinforcement learning field.
Then, \cite{sugiyama2012preference} utilizes preferences to estimate the reward function for dialogue control from rated dialogue sequences and formally proposed Preference-based Inverse Reinforcement Learning (PIRL).
The goal of PIRL is to learn a reward function from expert preferences. 
Unlike IRL, PIRL only requires the expert to provide preference comparisons, making it widely applicable to many tasks.
Recently, the achievement of introducing human feedback (RLHF) into Large Language Model (LLM) \cite{chatgpt,radford2023gpt4} by PIRL has brought PIRL back into the research spotlight.

\subsection{Actor-Critic Framework}
The Actor-Critic framework is a prominent reinforcement learning approach that combines two distinct learning paradigms: policy-based and value-based methods. Within this framework, the actor and critic have distinct roles and responsibilities that contribute to its overall effectiveness.
The actor is in charge of decision-making, determining the optimal course of action given the current state of the environment. Meanwhile, the critic evaluates the value of the selected state or action, providing an estimation of the expected return. The interplay between these two components is critical for refining the learning process.
The actor optimizes its policy based on the chosen action and the value estimation provided by the critic, with the aim of improving the expected discounted return. This optimization is achieved through the use of policy gradient (PG) algorithms. On the other hand, the critic focuses on updating its value estimation function according to the actions taken by the actor and the actual reward signal. This process allows the critic to enhance the accuracy of its value predictions and is typically implemented using temporal difference (TD) learning algorithms.
Through this synergistic relationship between the actor and the critic, the Actor-Critic framework effectively balances exploration and exploitation, resulting in a more robust and efficient reinforcement learning process.

\subsection{Reinforcement Learning for Media Generation}
Recently, ChatGPT \cite{chatgpt} and GPT-4 \cite{radford2023gpt4} have garnered significant interest from both academia and industry. 
Large language models fine-tuned through Reinforcement Learning from Human Feedback (RLHF) \cite{christiano2017deep, warnell2018deep,macglashan2017interactive} achieve impressive performance on multiple natural language generation tasks, such as multi-turn dialogue generation \cite{ouyang2022training, bai2022training, glaese2022improving}, text summarization \cite{stiennon2020learning}, information and QA \cite{menick2022teaching}, \textit{etc}.
Apart from natural language, Reinforcement Learning also exists extensively in other media content generation tasks, including image generation \cite{huang2019learning, sun2022learning}, video summarization \cite{zhou2018deep}, text-to-speech \cite{liu2021reinforcement}, music synthesis \cite{jaques2016generating} and gesture generation \cite{sun2023cospeech}.
Specially, in the field of motion generation, \cite{toverud2022reinforcement} used RL and principal component analysis for music-free dance motion generation with hand-designed reward function. 
Similarly, \cite{siyao2022bailando} used RL to optimize their model during the fine-tuning process on the test set, employing a hand-designed reward function to enhance the alignment of dance movements with the beat and orientation. 
However, hand-designed reward for certain targets may inadvertently impair the agent's performance in other aspects (as described in Section \ref{sec:hdreward}). 
Additionally, the design of reward functions depends heavily on domain experience and requires constant trail and verification, which can be challenging and time-consuming \cite{wirth2017survey,liu2022metarewardnet}.
In contrast, our approach learns the reward function by distinguishing different noise levels injected into demonstrators, offering a simpler, more comprehensive, and more accurate solution. 
Most importantly, we optimize our policy and sufficiently unlock the potential of exploration during the training process. 
This sets our method apart from the \cite{siyao2022bailando}, leading to a robust and adaptable dance generation model.

\section{Introduction to Proximal Policy Optimization and Its Components}

\subsection{Vanilla Policy Gradient}
Vanilla policy gradient methods combine the increase in the probability of selecting high-return actions by the policy with multi-step Markov decision processes. It aims to optimize a policy directly by following the gradient of the expected return with respect to the policy parameters. The gradient of the expected return is expressed as:

\begin{equation}
\nabla_{\phi} \mathbb{E} \left[\sum R_t \right] = \mathbb{E} \left[\sum \left(\nabla_{\phi} \log(\pi_{\phi}(a_t|s_t)) \cdot r(s_t, a_t)\right) \right],
\end{equation}
where $r(s_t, a_t)$ denotes the return for taking action $a_t$ in state $s_t$.

\subsection{Advantage Function}
The advantage function measures the relative benefit of taking an action $a_t$ in state $s_t$ compared to the average action in that state. The advantage function can be defined using the return $r(s_t, a_t)$, the state-value function $V_{\pi}(s_t)$, and the next state's value function $V_{\pi}(s_{t+1})$:

\begin{equation}
A_{\pi}(s_t, a_t) = r(s_t, a_t) + \gamma V_{\pi}(s_{t+1}) -V_{\pi}(s_t),
\end{equation}
where $\gamma$ represents the discount factor. Employing the advantage function $A_{\pi}(s_t, a_t)$ in place of the return $r(s_t, a_t)$ enables the algorithm to concentrate on enhancing the most promising actions while reducing the variance in gradient estimation, ultimately improving learning efficiency.

The formula above represents the calculation of the advantage in temporal difference form. However, it has the disadvantage of introducing bias due to the biased estimate of the return.
In this paper, we adopt a formula for calculating the advantage that directly utilizes the return, which avoids introducing bias and also reduces variance.

\begin{equation}
A_{\pi}(s_t, a_t) = \sum_{t=0}^{T-1}\gamma^tr(s_t,a_t) -V_{\pi}(s_t)
\end{equation}

\subsection{Importance Sampling}
Importance sampling is introduced to account for the discrepancy between the old and new policies when updating the policy parameters. This technique allows for the estimation of the expected return under the new policy ($\pi_{\phi}$) using samples generated from the old policy ($\pi_{\phi'}$). The importance sampling ratio is given by:

\begin{equation}
\rho_t = \frac{\pi_{\phi}(a_t|s_t)}{\pi_{\phi'}(a_t|s_t)}
\end{equation}

\subsection{PPO Clipping}
To maintain stability during the optimization process, Proximal Policy Optimization (PPO) introduces a clipped surrogate objective function. The surrogate objective function is optimized multiple times using the advantage $A_t$ calculated from trajectories collected by the old policy, in order to achieve a monotonic and stable performance improvement of the policy. However, as PPO is an on-policy algorithm that uses the old advantage $A_t$ for multiple optimization steps, it is necessary to constrain the difference between the old and new policies. The rationale for clipping is to limit the policy update, ensuring that the new policy does not deviate excessively from the old policy. The PPO clipped objective function is:

\begin{equation}
L(\phi) = \mathbb{E}_{\tau\sim\pi_{\phi'}}\left[\sum_{t=0}^T\min\left(\rho_t A_t,\text{clip}(\rho_t,1-\eta,1+\eta)A_t)\right)\right],
\end{equation}
where $\eta$ is a small positive constant \textit{e.g.}, 0.1 or 0.2.

In summary, PPO incorporates vanilla policy gradient, advantage function, importance sampling, and a clipped surrogate objective function to create a more stable and efficient optimization process for MDP problems.








\end{document}